\documentclass{article}
\usepackage{amsmath,amsfonts, amsthm,amssymb,fullpage}
\newtheorem{conjecture}{Conjecture}
\newtheorem{proposition}{Proposition}
\newtheorem{claim}{Claim}
\newcommand{\F}{\mathbb{F}}
\newcommand{\inset}[1]{\left\{#1\right\}}
\usepackage[pdftex]{hyperref}

\title{A Note on the Permuted Puzzles Toy Conjecture}
\author{Keller Blackwell\thanks{Stanford University.  \texttt{kellerb@stanford.edu}} \ and Mary Wootters\thanks{Stanford University.  \texttt{marykw@stanford.edu}}}

\begin{document}
\maketitle

\begin{abstract}
In this note, we show that a ``Toy Conjecture'' made by (Boyle, Ishai, Pass, Wootters, 2017) is false, and propose a new one.  Our attack does not falsify the full (``non-toy'') conjecture in that work, and it is our hope that this note will help further the analysis of that conjecture.  Independently, (Boyle, Holmgren, Ma, Weiss, 2021) have obtained similar results.
\end{abstract}

\section{Introduction}
Recently, two independent works~\cite{BIPW17,CHS17} proposed a notion
called \em oblivious locally decodable codes, \em (OLDCs), motivated by applications in private information retrieval (PIR).\footnote{The OLDC terminology is from \cite{BIPW17}; in \cite{CHS17}, the corresponding notion is designated-client doubly-efficient PIR.}  These works gave candidate constructions of OLDCs based on a new conjecture regarding the hardness of distinguishing a uniformly random set of points from a permutation of local-decoding queries in a Reed-Muller code.  In order to encourage study of this conjecture, \cite{BIPW17} proposed a simplified ``Toy Conjecture,'' which we reproduce below as Conjecture~\ref{conj}.  In this note, we show that this Toy Conjecture is false, and propose a new one that is resistant to our attack.  We note that this does not refute the full conjecture of \cite{BIPW17,CHS17}.

Independently of this note, Boyle, Holmgren, Ma and Weiss have also established that the Toy Conjecture is false~\cite{BHMW21}, and have proposed a new Toy Conjecture.

\section{The Toy Conjecture and an Attack}
The Toy Conjecture of \cite{BIPW17} is the following.
\begin{conjecture}[Toy Conjecture 4.6 in \cite{BIPW17}]\label{conj}
Let $\F$ be a finite field with $|\F| = q \approx \lambda^2$.  Let $p_1, \ldots, p_m$ be uniformly random polynomials of degree at most $\lambda$ in $\F[X]$, for $m = \lambda^{100}$.  Let $q_1, \ldots, q_m$ be uniformly random functions from $\F$ to $\F$.  
Let $\pi \in S_{\F \times \F}$ be a uniformly random permutation.  Then the following two distributions are computationally indistinguishable:
\begin{itemize}
\item[(1)] $(S_1, \ldots, S_m)$, where $S_i = \inset{ \pi(x, p_i(x)) \ :\ x \in \F }$.
\item[(2)] $(T_1, \ldots, T_m)$, where $T_i = \inset{ \pi(x, q_i(x)) \ :\ x \in \F }$.
\end{itemize}
\end{conjecture}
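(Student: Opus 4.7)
Since the abstract announces that Conjecture~\ref{conj} is false, the plan is to construct an efficient distinguisher between distributions $(1)$ and $(2)$ with noticeable advantage, exploiting the algebraic rigidity of low-degree polynomials. The first step is to observe that, for any two points $u, v \in \F \times \F$ whose preimages under $\pi$ share the same first coordinate, no set $S_i$ (resp.\ $T_i$) contains both $u$ and $v$, since each $p_i$ (resp.\ $q_i$) is a function. Conversely, for points with distinct underlying first coordinates, the count $|\{i : u, v \in S_i\}|$ concentrates tightly around $m/q^2 = \lambda^{96}$, which is huge. So by computing these pairwise counts and testing whether they vanish, one recovers (with overwhelming probability) the partition of $\F \times \F$ into the $q$ ``columns'' $\pi(\{x\} \times \F)$. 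This step works identically in both distributions, so it does not itself distinguish them; rather, it reduces the problem to distinguishing $m$ codewords of a Reed-Solomon code of dimension $\lambda+1$ (with unknown column ordering and unknown per-column symbol labels) from $m$ uniformly random strings in $\F^q$.

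Having made this reduction, I would try to exploit the $(\lambda+1)$-dimensional structure via a statistic on the pairwise agreements $|S_i \cap S_j|$. In distribution $(1)$ this agreement equals the number of roots in $\F$ of $p_i - p_j$, so it is bounded by $\lambda$ unless $p_i = p_j$; in $(2)$ it is $\mathrm{Binomial}(q, 1/q)$, whose tail extends beyond $\lambda$. A direct calculation shows that the two distributions agree on all factorial moments up through the $(\lambda+1)$-th, but that the $(\lambda+2)$-th factorial moment differs by a factor of $q = \lambda^2$, with case $(1)$ dominated by the atom at $p_i = p_j$ of weight $q^{-(\lambda+1)}$.

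The main obstacle is turning this moment gap into a detectable signal in the parameter regime of interest. Witnessing pairs $p_i = p_j$ appear in expectation $m^2/q^{\lambda+1} = \lambda^{198 - 2\lambda}$ times, giving a clean signal for $\lambda \lesssim 99$ but essentially none for larger $\lambda$. For larger $\lambda$ I would attempt to detect the structural difference via higher-dimensional patterns---for instance, fixing a set of $\lambda+2$ recovered columns and testing whether the $m$ projected ``codewords'' concentrate on a subset of size $\approx q^{\lambda+1}$ (as forced by the Reed-Solomon parity check in $(1)$) rather than filling the full $q^{\lambda+2}$ possibilities (as in $(2)$). I expect the crux of the argument to be the design and analysis of such a high-dimensional test that is simultaneously efficient and robust against the hidden per-column symbol permutations, which obscure the underlying linear structure.
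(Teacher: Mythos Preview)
Your proposal correctly recovers the column partition and gives a working distinguisher when $\lambda$ is a small constant (roughly $\lambda \lesssim 99$), but, as you yourself note, the intersection-size statistics lose all power once $\lambda$ grows, and your sketch of a ``higher-dimensional'' test is not yet an argument: detecting that $m$ samples live on a subset of size $q^{\lambda+1}$ inside $q^{\lambda+2}$ via collisions would require on the order of $q^{(\lambda+1)/2}$ samples, vastly more than $m=\lambda^{100}$, and the per-column relabelings you identify destroy any direct linear test on the recovered columns. So for the actual parameter regime of the conjecture you do not have a distinguisher.

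The idea you are missing is that one need not undo the permutation at all. Form the $m \times q^2$ incidence matrix $M$ over $\F$ with $M_{i,u} = \mathbf{1}[u \in U_i]$ and compute its $\F$-rank; since $\pi$ merely permutes the columns of $M$, this equals the rank of the unpermuted matrix $A_{i,(\alpha,\beta)} = \mathbf{1}[f_i(\alpha)=\beta]$. In case~(1) the vector $v_{(\alpha,\beta)} = \beta$ lies in $\mathrm{Ker}(A)$, because $\sum_{\alpha\in\F} p_i(\alpha) = 0$ whenever $\deg p_i \le \lambda < q-1$; together with the obvious $(q-1)$-dimensional kernel $K$ of vectors constant on each column and summing to zero, this forces $\mathrm{rank}(A) \le q^2 - q$. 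In case~(2) a union bound shows that with probability at least $1 - e^{-\lambda^{97}}$ the kernel is exactly $K$, so $\mathrm{rank}(A) = q^2 - q + 1$. Rank is computable in time $\mathrm{poly}(m,q)$, so this distinguishes for all $\lambda$. The crucial point is to treat the $0/1$ incidence data $\F$-linearly rather than combinatorially: the rank is automatically permutation-invariant, and the extra kernel vector directly encodes the degree constraint via the identity $\sum_{\alpha\in\F}\alpha^c = 0$ for $0 \le c < q-1$.
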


To show that this conjecture is false, we give below an efficient algorithm to distinguish the two distributions.

\begin{itemize}
\item Input: $(U_1, \ldots, U_m)$, where $U_i \subset \F \times \F$
\item Construct that matrix $M \in \F^{m \times (\F \times \F)}$ that is given by 
$$M_{i, (\alpha, \beta)} = \mathbf{1}[ (\alpha, \beta) \in U_i ].$$
\item If $\mathrm{rank}(M) < q^2 - q + 1$, output ``Case (1).''
\item Otherwise, output ``Case (2).''
\end{itemize}

\begin{proposition} The algorithm above correctly distinguishes between Case (1) and Case (2) with high probability.  More precisely, in Case (1), the algorithm returns ``Case (1)'' with probability $1$.  In Case (2), the algorithm returns ``Case (2)'' with probability at least $1 - e^{-\lambda^{97}}$ over the choice of the functions $q_i$.
\end{proposition}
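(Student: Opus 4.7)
The plan is to handle Cases (1) and (2) separately, starting from the observation that $\pi$ acts on $M$ only by permuting columns and so preserves $\mathrm{rank}(M)$. It is equivalent to analyze the ``unpermuted'' matrix $M'$ whose $i$-th row is $f_{p_i} \in \F^{\F \times \F}$ (Case (1)) or $f_{q_i}$ (Case (2)), where $(f_p)_{(x,y)} := \mathbf{1}[y = p(x)]$, and to bound the dimension of its right null space over $\F$.

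For Case (1), I would exhibit $q$ linearly independent vectors in this null space. Let $e_x \in \F^{\F \times \F}$ be the indicator of the column $\inset{(x, y) : y \in \F}$. Since every polynomial graph meets every column in exactly one point, $\langle e_x, f_{p_i}\rangle = 1$, so each of the $q-1$ linearly independent differences $e_x - e_{x'}$ annihilates every row. A $q$-th null vector is $v_{(x,y)} := y$: since $\deg p_i \leq \lambda \leq q - 2$, the power-sum identity $\sum_{x \in \F} x^k = 0$ for $0 \leq k \leq q - 2$ forces $\langle v, f_{p_i}\rangle = \sum_x p_i(x) = 0$. The vector $v$ depends nontrivially on $y$ while each $e_x - e_{x'}$ does not, so these $q$ vectors are linearly independent, giving $\mathrm{rank}(M) \leq q^2 - q < q^2 - q + 1$ with probability one.

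For Case (2), let $V_{\mathrm{rand}} \subseteq \F^{\F \times \F}$ be the span of $f_q$ ranging over all functions $q \colon \F \to \F$. A short direct check (any $h$ with $\sum_x h(x, q(x)) = 0$ for every $q$ must, by varying a single coordinate, be independent of $y$, and then must have zero sum in $x$) shows $\dim V_{\mathrm{rand}} = q^2 - q + 1$ exactly. The task is to show that the $m$ random rows $f_{q_1}, \ldots, f_{q_m}$ span all of $V_{\mathrm{rand}}$ with probability at least $1 - e^{-\lambda^{97}}$.

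This last step is the main obstacle; a direct union bound over candidate ``missing directions'' in $V_{\mathrm{rand}}$ is doubly-exponentially too weak. Instead I would analyze the growing span $V_t := \mathrm{span}(f_{q_1}, \ldots, f_{q_t})$ incrementally, and prove that whenever $t \geq 1$ and $V_t \subsetneq V_{\mathrm{rand}}$, the conditional probability that $V_{t+1} \supsetneq V_t$ is at least $1/q$. To see this, pick $h \in V_t^\perp \setminus V_{\mathrm{rand}}^\perp$; as soon as $t \geq 1$, $h$ cannot be a function of $x$ alone (any such $h$ either has zero $x$-sum and already lies in $V_{\mathrm{rand}}^\perp$, or else fails to annihilate the first row). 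So there is some $x_0$ where $h(x_0, \cdot)$ is non-constant, and writing $\langle h, f_{q_{t+1}}\rangle = \sum_x h(x, q_{t+1}(x))$ as a sum of independent random variables (the summand at $x_0$ taking at least two distinct values), any particular total is hit with probability at most $(q-1)/q$. Since only $q^2 - q \leq \lambda^4$ enlarging steps are needed while $m = \lambda^{100}$ rows are available, a Chernoff bound on a $\mathrm{Binomial}(m - 1, 1/q)$ random variable, which stochastically lower-bounds the number of enlarging steps via a standard coupling, drives the failure probability to $e^{-\Omega(\lambda^{98})}$, well below $e^{-\lambda^{97}}$ for large $\lambda$.
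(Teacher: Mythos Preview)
Your proof is correct. Case~(1) matches the paper exactly: the paper also exhibits the $(q-1)$-dimensional subspace $K$ of ``column-constant, zero-sum'' vectors together with the extra kernel vector $v_{(\alpha,\beta)}=\beta$, using the same power-sum identity.

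For Case~(2) you take a genuinely different route. The paper does \emph{not} avoid the union bound you dismiss as ``doubly-exponentially too weak''; it uses precisely that union bound, and it works. For each fixed $v\notin K$ one has $\Pr[A_i^{T}v=0\ \forall i]\le(1-1/q)^m\le e^{-m/q}$, and a union bound over all $q^{q^2}$ vectors gives
\[
q^{q^2}\,e^{-m/q}\;=\;\exp\bigl(q^2\log q - m/q\bigr)\;\approx\;\exp\bigl(2\lambda^{4}\log\lambda-\lambda^{98}\bigr)\;\le\;e^{-\lambda^{97}},
\]
since $q\approx\lambda^2$ and $m=\lambda^{100}$ make $m/q\approx\lambda^{98}$ dominate $q^2\log q\approx\lambda^4\log\lambda$. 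So there is nothing doubly exponential to overcome: $q^{q^2}$ is only $e^{\mathrm{poly}(\lambda)}$ here. Your incremental-span plus Chernoff argument is a valid alternative and has the pleasant feature that it would still succeed with a much smaller $m$ (roughly $m=\Theta(q^3)$ would already do), whereas the paper's crude union bound genuinely needs $m\gg q^3\log q$. For the stated parameters, however, the paper's approach is shorter.
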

\begin{proof}
To prove that this algorithm is correct, we will show that under distribution (2), the matrix $M$ has rank exactly $q^2 - q + 1$ with high probability; while under distribution (1), the matrix has rank strictly less than that.

Suppose that $f_1, \ldots, f_m$ are the functions that drawn (either $f_i = p_i$ in case (1), or $f_i = q_i$ in case (2)).
Let $A \in \F^{m \times (\F \times \F)}$ be the matrix that is given by
$$ A_{i, (\alpha, \beta)} = \mathbf{1}[f_i(\alpha) = \beta].$$
Notice that $M$ is a column permutation of $A$, so $\mathrm{rank}(A) = \mathrm{rank}(M)$.  
Thus, to show that the algorithm above is correct, it suffices to study the random of $A$ in cases (1) and (2).
In the following, we let $A_i$ denote the $i$'th row of $A$.

Let 
$$K = \inset{ v\in \F^{\F \times \F}\ :\ v_{(\alpha,\beta)} = w_\alpha \text{ for some $w_\alpha \in \F^\F$ so that } \sum_{\alpha \in \F} w_\alpha = 0}.$$
Notice that $K$ is a subspace of $\F^{\F \times \F}$ and that $\mathrm{dim}(K) = q-1$.
\begin{claim}\label{cl:1}
Suppose that case (2) holds, so $f_i = q_i$ is a uniformly random function.  Then $\mathrm{Ker}(A) = K$ with probability at least $1 - \lambda^{-97}$ over the choice of the functions $f_i$.
\end{claim}
\begin{proof}
First, observe that $K \subseteq \mathrm{Ker}(A)$, since for any $i \in [m]$,
\[ \sum_{\alpha, \beta \in \F} A_{i, (\alpha,\beta)} v_{(\alpha, \beta)} = \sum_{\alpha, \beta \in \F} \mathbf{1}[f_i(\alpha) = \beta] w_\alpha = \sum_{\alpha \in \F} w_\alpha = 0. \]
On the other hand, suppose that $v \not\in K$.  If $v_{(\alpha, \beta)} = w_\alpha$ for some $w \in \F^\F$ so that $\sum_{\alpha \in \F} w_\alpha \neq 0$, then clearly $v \not\in \mathrm{Ker}(A)$.  So suppose that $v_{(a,b)} \neq v_{(a,b')}$ for some $a,b,b' \in \F$.
Then let
\[ X_i = \sum_{\alpha \neq a} \sum_{\beta \in \F} A_{i, (\alpha, \beta)} v_{(\alpha,\beta)}.\]
This is a random variable over the choice of $f_i$.  Now, for any $i$, and for any $x \in \F$,
\[ \Pr\left[ A_i^T v = 0 \mid X_i = x \right] = \Pr\left[ \sum_{\beta \in \F} A_{i,(\alpha,\beta)} v_{(\alpha,\beta)} = -x \right] = \Pr\left[ v_{a, f_i(a)} = -x \right], \]
where again the probability is over the choice of $f_i$. 
Since $v_{(a,b)} \neq v_{(a,b')}$, there is at least a $1 - 1/q$ chance that $v_{a, f_i(a)} \neq -x$, if $f_i = q_i$ is a uniformly random function.  Thus, for all $i \in [m]$ and for all $x \in \F_q$,
\[ \Pr\left[ v_{(a, f_i(a))} \neq -x \right] \leq 1 - 1/q. \]
This implies that for all $i \in [m]$, 
\[ \Pr\left[ A_i^T v = 0 \right] = \sum_{x \in \F} \Pr[X_i = x] \Pr[ A_i^T v = 0 \mid X_i = x ] \leq 1 - 1/q. \]
By the independence of the $f_i$,
\[ \Pr[ A_i^T v = 0 \forall i \in [m] ] \leq \left( 1 - 1/q \right)^m \leq e^{-m/q}. \]
By the union bound over all $v$ of this form,
\[ \Pr[ \exists v \not\in K, A_i^T v = 0 \forall i \in [m] ] \leq q^{q^2} e^{-m/q} \leq e^{q^2 \log q - m/q } \leq e^{-\lambda^{97}}, \]
using the choice of $q \approx \lambda^2$ and $m = \lambda^{100}$.
\end{proof}

This establishes that, in case (2), with probability at least $e^{-\lambda^{97}}$, $A$ has rank
\[ \mathrm{rank}(K) = q^2 - \mathrm{dim}(K) = q^2 - q + 1. \]

On the other hand, in case (1), $A$ has kernel vectors that are not in $K$.  One example is the vector $v \in \F^{\F^2}$ given by $v_{(\alpha,\beta)} = \beta$.  To see that $v \in \mathrm{Ker}(A)$, when $f_i = p_i$ is a polynomial of degree $\lambda < q-1$, observe that
\[ A_i^T v = \sum_{\alpha, \beta \in \F} \mathbf{1}[p_i(\alpha) = \beta] \cdot \beta = \sum_{\alpha \in \F} p_i(\alpha) = 0, \]
where in the final equality we have used the fact that $\sum_{\alpha \in \F} \alpha^c = 0$ for any $0 \leq c < q-1$.  

This establishes that, in case (1), $\mathrm{Ker}(A) \supsetneq K$, which implies that $A$ has rank
\[ \mathrm{rank}(K) = q^2 - \mathrm{dim}(K) < q^2 - q + 1. \]

This shows that the algorithm above correctly distinguishes between cases (1) and (2), with probability at least $1 - e^{-\lambda^{97}}$.

\end{proof}

\section{A New Toy Conjecture}
We note that the attack above does not work if the evaluation points for the $f_i$ are a random subset resampled each time (which is indeed the case for the more general permuted puzzles conjecture of \cite{BIPW17}).  Thus, we propose the following replacement toy conjecture:
\begin{conjecture}[New Toy Conjecture]
Let $\F$ be a finite field with $|\F| = q \approx \lambda^2$.  Let $p_1, \ldots, p_m$ be uniformly random polynomials of degree at most $\lambda$ in $\F[X]$, for $m = \lambda^{100}$. Let $q_1, \ldots, q_m$ be uniformly random functions from $\F$ to $\F$.  
Let $\pi \in S_{\F \times \F}$ be a uniformly random permutation.  
Let $\Omega^{(1)}, \ldots, \Omega^{(m)} \subset \F$ be independent uniformly random sets of size $100 \cdot \lambda$.
Then the following two distributions are computationally indistinguishable:
\begin{itemize}
\item[(1)] $(S_1, \ldots, S_m)$, where $S_i = \inset{ \pi(x, p_i(x)) \ :\ x \in \Omega^{(i)} }$.
\item[(2)] $(T_1, \ldots, T_m)$, where $T_i = \inset{ \pi(x, q_i(x)) \ :\ x \in \Omega^{(i)} }$.
\end{itemize}
\end{conjecture}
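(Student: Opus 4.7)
Since the final statement is a conjecture rather than a theorem, I do not expect to produce an unconditional proof; my plan is to gather evidence by ruling out the most natural classes of attacks, starting with the rank-based attack of Proposition~1. Concretely, I would form the analogous matrix $A \in \F^{m \times (\F \times \F)}$ with $A_{i,(\alpha,\beta)} = \mathbf{1}[\alpha \in \Omega^{(i)} \text{ and } f_i(\alpha) = \beta]$ and analyze $\mathrm{Ker}(A)$. The witness vector $v_{(\alpha,\beta)} = \beta$ used in the proof of Proposition~1 no longer lies in $\mathrm{Ker}(A)$ in case~(1), because $\sum_{x \in \Omega^{(i)}} p_i(x)$ is not identically zero when $\Omega^{(i)}$ is a small random subset of $\F$. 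More generally, the kernel subspace $K$ from the earlier proof disappears: a vector of the form $v_{(\alpha,\beta)} = w_\alpha$ now imposes the row constraint $\sum_{\alpha \in \Omega^{(i)}} w_\alpha = 0$ for each $i$, and with $m = \lambda^{100}$ independent random sets $\Omega^{(i)}$ of size $100\lambda$ this collapses the coordinates of $w$. Verifying this rigorously via a union bound over candidate $v$'s would be the first concrete step.

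Next, I would try to rule out natural variants of the attack: low-degree polynomial statistics in the entries of the $S_i$, parity checks over small collections of rows of $A$, and spectral distinguishers based on submatrices. The intuition that motivates the conjecture is that for any $\lambda+1$ distinct evaluation points, $(p_i(x_1),\ldots,p_i(x_{\lambda+1}))$ is uniform on $\F^{\lambda+1}$ when $p_i$ is a uniformly random polynomial of degree at most $\lambda$; thus most coordinates of the local view of each $S_i$ look genuinely random, and the remaining $\approx 99\lambda$ ``polynomial-determined'' coordinates should be further masked across different $i$'s by the shared permutation $\pi$. I would try to formalize this by bounding the statistical distance between the two distributions restricted to any fixed low-complexity statistic that does not exploit correlations through $\pi$.

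The hard part, and the reason this is a conjecture rather than a theorem, is that none of the above is enough: computational indistinguishability requires ruling out \emph{every} polynomial-time distinguisher, which typically demands a reduction to a standard cryptographic assumption. I do not see a clean reduction here, and the presence of the shared permutation $\pi$ across $m = \lambda^{100}$ sets seems to place the problem outside of current cryptographic machinery. I therefore expect my plan to fall short of a full proof and to instead yield only a formal certification that the rank attack and a small family of natural generalizations fail --- leaving the computational indistinguishability claim itself as the open conjecture it is.
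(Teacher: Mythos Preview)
Your proposal is appropriate: the statement is posed in the paper as an open conjecture, and the paper does not offer a proof. The paper's only justification is the one-line remark that the rank attack of Proposition~1 ``does not work if the evaluation points for the $f_i$ are a random subset resampled each time.'' Your first paragraph expands exactly this remark, and does so correctly: the distinguishing kernel vector $v_{(\alpha,\beta)}=\beta$ fails because $\sum_{x\in\Omega^{(i)}}p_i(x)$ is no longer forced to vanish, and the common kernel $K$ collapses because the constraints $\sum_{\alpha\in\Omega^{(i)}}w_\alpha=0$ over $m=\lambda^{100}$ random sets pin $w$ down. Your additional program (ruling out low-degree, parity, and spectral distinguishers, and your honest acknowledgment that none of this yields computational indistinguishability without a reduction) goes well beyond what the paper attempts, but is a sensible research plan rather than a proof---which is exactly what the situation calls for.

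One small clarification worth keeping in mind as you carry this out: the relevant question for the rank attack is not whether $K$ survives per se (since $K$ was contained in the kernel in \emph{both} cases and thus never helped distinguish), but whether the kernel in case~(1) is strictly larger than in case~(2). Your argument that the specific witness $v_{(\alpha,\beta)}=\beta$ fails is the right first step; to certify that the rank attack fails completely you would want to argue that, with high probability over the $\Omega^{(i)}$ and the $p_i$, no vector outside the case-(2) kernel lands in the case-(1) kernel either.
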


\section*{Acknowledgements} We thank Elette Boyle, Justin Holmgren, Fermi Ma and Mor Weiss for helpful conversations and for pointing out typos in an earlier version of this note.

\bibliographystyle{alpha}
\bibliography{refs}
\end{document}